\newcommand{\email}[1]{{#1}}
\title{Block Palindromes: \\ A New Generalization of Palindromes}
\author[1]{Keisuke~Goto}
\author[2]{Tomohiro~I}
\author[3]{Hideo~Bannai}
\author[3]{Shunsuke~Inenaga}
\affil[1]{Fujitsu Laboratories Ltd., Kawasaki, Japan.
\email{goto.keisuke@jp.fujitsu.com}}
\affil[2]{Kyushu Institute of Technology, Iizuka, Japan.
\email{tomohiro@ai.kyutech.ac.jp}}
\affil[3]{Kyushu University, Fukuoka, Japan.
\email{\{bannai, inenaga\}@inf.kyushu.ac.jp}}
\newtheorem{theorem}{Theorem} 
\newtheorem{lemma}{Lemma}
\providecommand{\keywords}[1]{\linebreak \linebreak \textbf{\textit{Keywords:}} #1}
\newtheorem{remarkx}{Remark}
\newtheorem{examplex}{Example}
\renewenvironment{proof}[1][\proofname]{\par
  \normalfont
  \topsep6\p@\@plus6\p@ \trivlist
  \item[\hskip\labelsep{\bfseries #1}\@addpunct{\bfseries.}]\ignorespaces
}{%
  \endtrivlist
}
\renewcommand{\proofname}{Proof}
\definecolor{fxnote}{rgb}{1.0,0,0}    
\definecolor{fxtarget}{rgb}{1.0, 0, 0}
\newcommand{\T}{T}
\newcommand{\bb}{b}
\newcommand{\cc}{c}
\newcommand{\dd}{d}
\newcommand{\ee}{e}
\newcommand{\ff}{f}
\newcommand{\ii}{i}
\newcommand{\jj}{j}
\newcommand{\kk}{k}
\newcommand{\mm}{m}
\newcommand{\nn}{n}
\newcommand{\rr}{r}
\newcommand{\ww}{w}
\newcommand{\xx}{x}
\newcommand{\yy}{y}
\newcommand{\zz}{z}
\newcommand{\FF}{F}
\newcommand{\LL}{L}
\newcommand{\NN}{N}
\newcommand{\PP}{P}
\newcommand{\vg}{g}
\newcommand{\vA}{A}
\newcommand{\vB}{B}
\newcommand{\floor}[1]{\lfloor #1 \rfloor}
\newcommand{\ceil}[1]{\lceil #1 \rceil}
\newcommand{\LCE}{\mathit{LCE}}
\newcommand{\MBP}{\mathit{MBP}}
\date{}
\begin{document}

\maketitle

\begin{abstract}
We study a new generalization of palindromes and gapped palindromes called \emph{block palindromes}.
A block palindrome is a string that becomes a palindrome when identical substrings are replaced with a distinct character.
We investigate several properties of block palindromes and
in particular, study substrings of a string which are block palindromes.
In so doing, we introduce the notion of a \emph{maximal block palindrome},
which leads to a compact representation of all block palindromes that occur in a string.
We also propose an algorithm which enumerates all maximal block palindromes that appear in a given string $\T$ in $O(|\T| + \|\MBP(\T)\|)$ time, where $\|\MBP(\T)\|$ is the output size,
which is optimal unless all the maximal block palindromes
can be represented in a more compact way.
\keywords{palindrome, enumeration algorithm, factorization}
\end{abstract}

\section{Introduction}
A palindrome is a string that is equal to its reverse, e.g., ``\texttt{Able_was_I_ere_I_saw_Elba}'' (we treat upper and lower characters are the same for simple explanations).
Palindromes have been studied in combinatorics on words and stringology.

Many research focused on finding palindromic structure of a string.
Manacher~\cite{Manacher75} proposed a beautiful algorithm that enumerates all maximal palindromes of a string.
Kosolobov et al.~\cite{KosolobovRS15} proved that, a language $\PP^\kk$ can be recognizable in $O(\kk \NN)$ time, where $\PP$ is the language of all nonempty palindromes and $\NN$ is the length of an input string.
Alatabbi et al.~\cite{Alatabbi13} considered maximal palindromic factorization in which all factors are maximal palindromes.
They also consider a problem of computing the fewest palindromic factorization, and proposed off-line linear-time algorithms.
Later, I et al.~\cite{ISIBT14} and Fici et al.~\cite{FiciGKK14} independently proposed on-line $O(\NN \log \NN)$-time algorithms, where $\NN$ is the length of an input string.
Similar problems were also considered, such as, computing palindromic length~\cite{BorozdinKRS17}, computing palindromic covers~\cite{ISIBT14}, computing palindromic pattern matching~\cite{I2013}.

A gapped palindrome is a generalization of a palindrome that becomes a palindrome when a center substring is replaced by a character, where the center substring is a substring whose beginning and ending positions are equally far from the beginning and ending positions of the input string, respectively.
For example, ``\texttt{Madam,_he_is_Adam}'' is a gapped palindrome, and it becomes a palindrome if the center substring ``\texttt{m,_he_is_}'' is replaced by a character.
Gapped palindromes play an important role in molecular biology since they model a hairpin data structure of DNA and RNA sequences, see e.g.~\cite{Gerald08}.
Several problems were considered such as, enumeration of exact gapped palindromes of a string~\cite{KolpakovK09} and also enumeration of approximate gapped palindromes~\cite{NarisadaDNIS17,HsuCC10}, finding maximal length of long armed or and constrained length gapped palindrome~\cite{Gupta2016}.

In this paper, we consider the notion of block palindromes~\cite{BIO2015}, which is
a new generalization of palindromes and also gapped palindromes~\footnote{Block palindromes were firstly introduced in a problem of 2015 British Informatics Olympiad~\cite{BIO2015}, but we did not know the existence at the first version of this paper.}.
A block palindrome is a string that becomes a palindrome when identical substrings are replaced with a distinct character.
More precisely, a block palindrome is a ``symmetric'' factorization $\ff=\ff_{-\nn} \cdots \ff_{-1} \ff_0 \ff_1 \cdots \ff_\nn$ of a string with the center factor $\ff_0$ is a string (which may be empty) and each of other factor $\ff_{-\ii}=\ff_{\ii}$ is a non-empty string for any $1 \le \ii \le \nn$.
We also call a factor a block.
For convenience, let $\ff=\ff_0$ when $\nn=0$.
For example, a factorization ``\texttt{To|kyo|_|and|_|Kyo|to}'' is a block palindrome, where ``\texttt{|}'' is a mark to distinguish adjacent blocks.
Palindromes and gapped palindromes are special cases of block palindromes:
For a palindrome, all blocks are characters, and for a gapped palindrome, the center block $\ff_0$ is a string and the other blocks are characters.

We investigate several properties of block palindromes.
We introduce the notion of maximal block palindromes to concisely represent all block palindromes in a string,
and propose an algorithm which enumerates all maximal block palindromes in a string $\T$ in $O(|\T| + \|\MBP(\T)\|)$ time, where $\|\MBP(\T)\|$ is the output size. This is optimal unless all the maximal block palindromes can be represented in a more compact way.

\section{Preliminaries}

Let $\Sigma$ be an integer alphabet.
An element of $\Sigma^*$ is called a \emph{string}.
The string of length 0 is called the \emph{empty} string,
and is denoted by $\varepsilon$.
Although $\varepsilon$ is not contained in $\Sigma$,
we sometimes call $\varepsilon$ the empty character for convenience.
For a string $\T=\xx \yy \zz$, $\xx$, $\yy$ and $\zz$ are called a \emph{prefix}, \emph{substring}, and \emph{suffix} of $\T$, respectively.
In particular, a prefix (resp. suffix) $\xx$ of $\T$ is called a \emph{proper} prefix (resp. suffix) iff $\xx \neq \T$.
A non-empty string that is a proper prefix and also a proper suffix of $\T$ is called a \emph{border} of $\T$.
Hence, a string of length $\NN$ can have
at most $\NN-1$ borders of length ranging from 1 to $\NN-1$.
A string which does not have any borders is called an \emph{unbordered} string.
For $1 \leq \ii \leq \jj \leq |\T|$,
a substring of $\T$ which begins at position $\ii$ and ends at
position $\jj$ is denoted by $\T[\ii \ldots \jj]$.
For convenience, let $\T[\ii \ldots \jj] = \varepsilon$ if $\jj < \ii$.

In this paper, we also consider \emph{half-positions} $\kk+1/2$ for integers $0 \leq \kk \leq |\T|$.
For convenience, for a half-position $\ii$ and an integer $\rr$ such that $1/2 \leq \ii-\rr \leq \ii + \rr \leq |\T|+1/2$, let $\T[\ii-\rr \ldots \ii+\rr] = \T[\ceil{\ii-\rr} \ldots \floor{\ii+\rr}]$.
Note that $\T[\ii]$ for a half-position $\ii$ is the empty character.
The position $\cc=(|\T|+1)/2$ is called the \emph{center position} of $\T$,  $\T[\cc]$ is called the \emph{center character} of $\T$, and $\T[\cc -\dd \ldots \cc+\dd]$ for an integer $\dd$ is  called a \emph{center substring} of $\T$.

For a string $\T$ and integers $1 \leq \ii, \jj \leq |\T|$, a \emph{longest common extension} (LCE) query $\LCE_\T(\ii, \jj)$ asks the length of the longest common prefix of the two suffixes $\T[\ii \ldots |\T|]$ and $\T[\jj \ldots |\T|]$ of $\T$.
When clear from the context, $\LCE_\T(\ii, \jj)$ is abbreviated as $\LCE(\ii, \jj)$.
It is well known that if $\T$ is drawn from an integer alphabet of size polynomially bounded in $|\T|$, then 
LCE queries for $\T$ can be answered in constant time
after an $O(|\T|)$-time preprocessing, e.g.,
by constructing the suffix tree of $\T$ and a data structure for lowest common ancestor queries on the tree~\cite{Gusfield1997AST}.

For a block palindrome $\ff=\ff_{-\nn} \cdots \ff_{-1} \ff_0 \ff_1 \cdots \ff_\nn$, the length of $\ff$ denoted by $|\ff|$ is the total length of blocks, and the size of $\ff$ denoted by $\|\ff\|$ is the number of non-empty blocks.
A block palindrome is \emph{even} if its size is even
(that is, the center block $\ff_0$ is the empty string),
and otherwise \emph{odd} (that is, the center block $\ff_0$ is non-empty).

\section{Properties of Block Palindromes}
In this section, we investigate the properties of block palindromes.
We assume that $\T$ is an input string of length $\NN$ in the rest of the paper.

Since there are $O(2^\NN)$ factorization of $\T$ and block palindromes are symmetric, there are $O(2^{\NN/2})$ block palindromes of $\T$.
Moreover, there is a tight example that $\T$ consists of only the same characters.

Although there are a huge number of block palindromes of a string, they are very redundant.
To look for more essential properties of block palindromes, we define
the \emph{largest block palindrome} which is a representative of other block palindromes.
A block palindrome $\ff=\ff_{-\nn} \cdots \ff_{\nn}$ of $\T$ that has
the largest number of blocks among all block palindromes of $\T$ is called the largest block palindrome.
Note that each block $\ff_\ii$ for $0 \leq \ii \leq \nn$ is an unbordered substring and $\ff_\ii$ for $0 < \ii \leq \nn$ is  the shortest border of $\T[\kk + 1 \ldots \NN - \kk]$, where $\kk=0$ if $\ii=\nn$ and $\kk=|\ff_{\ii+1} \cdots \ff_\nn|$ otherwise.
So, the largest block palindrome of $\T$ is unique.
The largest block palindrome is a representative of all block palindromes in the sense that all block palindromes can be represented as block palindromes of $\ff$.

A natural and prompt question would be about how to efficiently compute the largest block palindrome of $\T$.
The following theorem answers this question.

\begin{theorem}\label{theorem:largest}
	The largest block palindrome $\ff_{-\nn} \cdots \ff_{\nn}$ of $\T$ can be computed in $O(\NN)$ time.
\end{theorem}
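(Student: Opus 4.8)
The plan is to reduce the computation to repeated shortest-border computations and then to show that a completely naive incremental search already runs in linear time once longest-common-extension queries are available. By the discussion preceding the theorem, the largest block palindrome is built greedily from the outside in: if $\kk$ denotes the total length of the blocks already stripped from both ends, then the next block $\ff_\ii$ is the shortest border of the center substring $\T[\kk+1\ldots\NN-\kk]$, and as soon as that center substring is unbordered it becomes the center block $\ff_0$ and the process stops. Writing $\va=\kk+1$ and $\vb=\NN-\kk$ (so that $\va+\vb=\NN+1$), I would first record two easy observations: the shortest border of $\T[\va\ldots\vb]$ is the least $\dd\ge 1$ with $\T[\va\ldots\va+\dd-1]=\T[\vb-\dd+1\ldots\vb]$, i.e.\ the least $\dd\ge 1$ with $\LCE_\T(\va,\vb-\dd+1)\ge \dd$; and this least $\dd$ always satisfies $2\dd\le\vb-\va+1$, because the shortest border of any string is itself unbordered (a proper border of it would be a strictly shorter border of $\T[\va\ldots\vb]$), and an unbordered border of a string of length $m$ has length at most $m/2$ (otherwise its two occurrences overlap, forcing a period shorter than it, hence a nonempty border). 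In particular, if no $\dd$ with $2\dd\le\vb-\va+1$ satisfies the condition, then $\T[\va\ldots\vb]$ is unbordered and equals $\ff_0$.

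Given these observations, the algorithm I would use is the following. Preprocess $\T$ in $O(\NN)$ time so that $\LCE_\T$ queries are answered in $O(1)$ time, by building the suffix tree of $\T$ together with a lowest-common-ancestor data structure as in the preliminaries. Then, starting from $\va=1$ and $\vb=\NN$, repeatedly scan $\dd=1,2,3,\dots$ and test whether $\LCE_\T(\va,\vb-\dd+1)\ge\dd$: as soon as the test succeeds, output $\T[\va\ldots\va+\dd-1]$ and $\T[\vb-\dd+1\ldots\vb]$ as the next mirrored pair of blocks, set $\va:=\va+\dd$ and $\vb:=\vb-\dd$, and restart the scan from $\dd=1$; if instead the scan reaches a $\dd$ with $2\dd>\vb-\va+1$, output the remaining (possibly empty) substring $\T[\va\ldots\vb]$ as the center block and halt. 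The output is the factorization (equivalently, the list of at most $\NN$ block boundaries). Correctness follows directly from the previous paragraph.

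The main point, and the only place where care is needed, is the running-time analysis, which I would do by amortization rather than by trying to build a shortest-border oracle. A scan that successfully produces a block of length $\dd$ performs $\dd$ tests and then shrinks $\vb-\va+1$ by exactly $2\dd$; since the total shrinkage over the whole run is at most $\NN$, the number of tests over all successful scans is $\sum_{\ii=1}^{\nn}|\ff_\ii|=(\NN-|\ff_0|)/2\le\NN/2$. The at most one unsuccessful scan performs $\floor{(\vb-\va+1)/2}+1=O(\NN)$ tests. Together with the $O(\NN)$ preprocessing and $O(1)$ time per test, this gives $O(\NN)$ total time. I expect the only real obstacle to be convincing oneself that this trivial incremental search suffices: it is tempting to hunt for a way to compute each individual shortest border in $O(1)$ or $O(\log\NN)$ time and then bound the number of distinct center substrings, but that is both unnecessary and, via a direct route, harder than the charging argument above, in which every rejected trial value $\dd$ at a given level is paid for by the $2\dd$ characters permanently discarded when $\dd$ finally succeeds at that level.
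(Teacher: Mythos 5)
Your proposal is correct and follows essentially the same route as the paper: preprocess for constant-time LCE queries, then greedily strip shortest borders found by testing candidate lengths $\dd=1,2,\dots$ with $\LCE(\va,\vb-\dd+1)\ge\dd$. Your explicit amortization (each successful scan of $\dd$ tests removes $2\dd$ characters, plus one $O(\NN)$ failing scan for the center) and the justification that the shortest border has length at most half the string are just more detailed versions of what the paper leaves implicit in its claim that the total number of LCE queries is $O(\NN)$.
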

\begin{proof}
	We construct a data structure in $O(\NN)$ time that can answer any LCE query in constant time.

	We greedily compute the blocks from outside $\ff_{\nn}$ to inner $\ff_{1}$ by LCE queries.
	We assume that we compute the shortest border $\ff_{\ii}$ of $\T[\bb \ldots \ee]$.
	For $\kk=1$ to $\floor{(\ee - \bb + 1)/2}$, we check whether $\T[\bb \ldots \bb + \kk - 1]$ is the border of $\T[\bb \ldots \ee]$ or not by checking whether $\LCE(\bb, \ee-\kk+1) \geq \kk$ or not.
	If $\T[\bb \ldots \ee]$ does not have any border, we obtain $\ff_0 = \T[\bb \ldots \ee]$.
	Otherwise, we obtain the shortest border $\ff_\ii=\T[\bb \ldots \bb + \kk-1]$ of $\T[\bb \ldots \ee]$, and compute the more inner blocks for $\T[\bb + \kk \ldots \ee - \kk]$.
	Since the number of LCE queries is $O(\NN)$ and each LCE query takes constant time, the largest block palindrome of $\T$ can be computed in $O(\NN)$ time.
\qed
\end{proof}

So far, we have considered only block palindromes that are equal to $\T$ itself.
Next, we consider block palindromes that appear as substrings in $\T$.
We define a \emph{maximal block palindrome} which is a representative of some block palindromes in $\T$, and study how many maximal block palindromes can appear in $\T$.

For a half-position $1 \leq \cc \leq \NN$ and an integer $1 \leq \dd \leq \NN/2$, let $\FF_\T(\cc, \dd)=\{\ff | \ff=\ff_{-\nn} \cdots \ff_0 \cdots \ff_{\nn} \mbox{ is the largest block palindrome}, \ff_0=\T[\cc - \dd + 1 \ldots \cc + \dd-1], \ff=\T[\cc - \dd - \kk+1 \ldots \cc + \dd + \kk-1], \kk = |\ff_1 \cdots \ff_\nn| \}$ be the set of largest block palindromes whose center positions are the same and whose center blocks appear at $\T[\cc-\dd+1 \ldots \cc+\dd-1]$.
When context is clear, we denote $\FF_{\T}$ by $\FF$.
For a string $\T$, a largest block palindrome $\ff \in \FF(\cc, \dd)$ such that $|\ff|$ is the longest, namely the number of blocks are maximal among all largest block palindromes of $\FF(\cc, \dd)$, is called a \textit{maximal block palindrome}.

We remark that the maximal block palindrome of $\FF(\cc, \dd)$ is a representative of all the largest block palindromes of $\FF(\cc, \dd)$.

\begin{remarkx}
	\label{lem:maximal-palindrome-is-representative}
  For a half-position $1 \leq \cc \leq \NN$ and an integer $1 \leq \dd \leq \NN/2$, any largest block palindrome $\ff=\ff_{-\nn} \cdots \ff_{\nn} \in \FF(\cc, \dd)$ is a sub-factorization of the maximal block palindrome $\vg = \vg_{-\mm} \cdots \vg_{\mm} \in \FF(\cc, \dd)$, that is, $\nn \leq \mm$ and $\ff_{\ii}=\vg_{\ii}$ for $0 \leq \ii \leq \nn$.
\end{remarkx}
\begin{proof}
	We assume that the statement does not hold.
	Let $\ff_\jj$ be a block that $\ff_{\jj} \neq \vg_{\jj}$, and $\jj=0$ or $\ff_{\ii}=\vg_{\ii}$ for $0 \leq \ii < \jj \leq \nn$.
	If $|\ff_{\jj}| < |\vg_{\jj}|$, $\ff_{\jj}$ is a border of $\vg_{\jj}$ and it contradicts that $\vg_{\jj}$ is the largest block palindrome.
	We can say the same things for the case $|\ff_{\jj}| > |\vg_{\jj}|$.
	Therefore, such $\ff_\jj$ and $\vg_\jj$ do not exist and this statement holds.
	\qed
\end{proof}

We are interested in how many maximal block palindromes can appear in $\T$.
It is trivially upper bounded by $O(\NN^2)$ since there are $O(\NN^2)$ substrings which can be center substrings.
If there is no limitation on the size of maximal block palindromes, we can easily see that it is tight.
For a string $\T$ of length $\NN$ in which the characters are all distinct, any substring $\ww$ is unbordered, and there is at least one maximal block palindrome that contains $\ww$ as a center block.
Thus, $\T$ can contain $\Theta(\NN^2)$ maximal block palindromes.
The following example says that the number of maximal block palindromes having three blocks has also the same tight upper bound.

\begin{examplex}
	\label{lem:bound-of-num-maximal-palindromes}
	The number of maximal block palindromes in $\T=\mathtt{a}^n\mathtt{b}^n\mathtt{a}\mathtt{b}\mathtt{a}^n\mathtt{b}^n$ that have at least three blocks is $\Theta(\NN^2)$, where $\cc^\nn$ for a character $\cc$ denotes run of $\cc$ of length $\nn$ , and $\nn=(\NN-2)/4$.
\end{examplex}
For convenience, we denote $\T$ by $\T=\vA_0 \vB_1 \vA_1 \vB_2 \vA_2 \vB_3$, where $\vA_0$, $\vB_1$, $\vA_1$, $\vB_2$, $\vA_2$, and $\vB_3$ are strings $\mathtt{a}^n$, $\mathtt{b}^n$, $\mathtt{a}$, $\mathtt{b}$, $\mathtt{a}^n$, and $\mathtt{b}^n$, respectively.
There are maximal block palindromes of size three that, for $1<\ii \leq \nn$, $1<\jj \leq \nn$, $\T[\nn-\jj+1 \ldots \NN-\nn+\ii-1]$=$(\vA_0[\nn-\jj+1 \ldots \nn] \vB_1[1..\ii-1])(\vB_1[\ii \ldots \nn]\vA_1 \vB_2 \vA_2[1 \ldots \jj])(\vA_2[\nn-\jj+1 \ldots \nn]\vB_3[1 \ldots \ii-1])$ and they are unbordered, where the parentheses indicate blocks.


Remark that the upper bound is reduced to $O(\NN)$ if we impose a limitation on the lengths of center blocks.
\begin{remarkx}
	\label{lem:limited-center-block}
  For any constant $\kk \ge 0$, a string of length $\NN$ can contain $\Theta(\NN)$
  maximal block palindromes whose center blocks are of length $\le \kk$ because there are $O(\NN)$ possible center blocks.
  In particular, a string contains at most $\NN - 1$ maximal block palindromes of even size (i.e., the center blocks must be empty)
  because the number of occurrences of center blocks are at most $\NN-1$.
\end{remarkx}

The following lemma shows an interesting property of maximal block palindromes, and this property can be used for the proof of Lemma~\ref{lem:size-maximal-block-palindrome}.
\begin{lemma}
	\label{lem:there-is-no-same-starting-positions-of-factors}
  For a half-position $1 \leq \cc \leq \NN$ and two integers $1 \leq \dd<\dd^\prime \leq \NN/2$, two largest block palindromes $\ff=\ff_{-\nn} \cdots \ff_{\nn} \in \FF(\cc, \dd)$ and $\vg = \vg_{-\mm} \cdots \vg_{\mm} \in \FF(\cc, \dd^\prime)$ do not share the block boundaries, namely, the ending positions of blocks $\kk_\ii$ and $\kk^\prime_\ii$ such that $\kk_\ii=\cc + \dd -1 + |\ff_1 \cdots \ff_\ii|$ and $\kk^\prime_\ii=\cc + \dd^\prime -1 + |\vg_1 \cdots \vg_\jj|$ do not equal for any $1 \leq \ii \leq \nn$ and $1 \leq \jj \leq \mm$.
\end{lemma}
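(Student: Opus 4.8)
The plan is to argue by contradiction. Suppose $\ff$ and $\vg$ do share a right-side block boundary, i.e.\ $\kk_\ii = \kk'_\jj$ for some $1 \le \ii \le \nn$ and $1 \le \jj \le \mm$ (if $\nn=0$ or $\mm=0$ there is nothing to prove). I aim to deduce that either $\dd = \dd'$, or one of $\ff_0,\vg_0$ is bordered, each of which is impossible: $\dd < \dd'$ by hypothesis, and $\ff_0 = \T[\cc-\dd+1\ldots\cc+\dd-1]$ and $\vg_0 = \T[\cc-\dd'+1\ldots\cc+\dd'-1]$, being center blocks of largest block palindromes, are unbordered (as recalled before Theorem~\ref{theorem:largest}; note that $\FF(\cc,\dd)\neq\emptyset$ already forces this). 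The only structural fact I will use repeatedly is the other half of that characterization: for $\ii \ge 1$, $\ff_\ii$ is the \emph{shortest} border of the inner factor $\ff_{-\ii}\cdots\ff_0\cdots\ff_\ii$, and likewise for $\vg$.

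The key observation is a rigidity property of inner factors. For $0 \le t \le \ii$, the inner factor $\ff_{-t}\cdots\ff_0\cdots\ff_t$ is a substring of $\T$ symmetric about the center $\cc$ of $\ff$ with right endpoint $\kk_t$ (where $\kk_0 = \cc+\dd-1$ is the right endpoint of $\ff_0$); hence its left endpoint is the reflection of $\kk_t$ about $\cc$, depending only on $\cc$ and $\kk_t$. The same holds for $\vg_{-s}\cdots\vg_0\cdots\vg_s$ with respect to $\cc$ and $\kk'_s$. Therefore, whenever $\kk_t = \kk'_s$, these two inner factors occupy exactly the same positions of $\T$, so they are the \emph{same} string; if moreover $t,s \ge 1$ then, by uniqueness of the shortest border, $\ff_t = \vg_s$, whence $\kk_{t-1} = \kk_t - |\ff_t| = \kk'_s - |\vg_s| = \kk'_{s-1}$.

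Starting from $\kk_\ii = \kk'_\jj$, I apply this step repeatedly to get $\kk_{\ii-t} = \kk'_{\jj-t}$ for every $0 \le t \le \min(\ii,\jj)$, by induction on $t$ (each step is legitimate because both indices remain $\ge 1$ until $t$ reaches $\min(\ii,\jj)$). Taking $t = \min(\ii,\jj)$ splits into three cases. If $\ii = \jj$, then $\kk_0 = \kk'_0$, i.e.\ $\cc+\dd-1 = \cc+\dd'-1$, so $\dd = \dd'$ --- impossible. If $\ii < \jj$, then $\ff_0$ (the substring symmetric about $\cc$ with right endpoint $\kk_0 = \kk'_{\jj-\ii}$) equals $\vg_{-(\jj-\ii)}\cdots\vg_0\cdots\vg_{\jj-\ii}$, which has $\vg_{\jj-\ii}$ as a border since $\jj-\ii \ge 1$ --- so $\ff_0$ is bordered, impossible. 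If $\ii > \jj$, then symmetrically $\vg_0$ equals $\ff_{-(\ii-\jj)}\cdots\ff_0\cdots\ff_{\ii-\jj}$ and thus has $\ff_{\ii-\jj}$ as a border --- impossible. Since every case is contradictory, $\ff$ and $\vg$ share no right-side block boundary; the left-side boundaries follow by the reflection symmetry about $\cc$.

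The crux --- and, I expect, the only step that is not pure bookkeeping --- is the rigidity observation: the agreement of a single boundary position, together with the fact that $\ff$ and $\vg$ are centered at the \emph{same} half-position $\cc$, forces the two inner factors to be the same substring of $\T$, after which uniqueness of the shortest border carries the agreement one block deeper. Everything downstream is just tracking which of $\ii,\jj$ hits $0$ first and reading off the corresponding contradiction; the half-integer position arithmetic is harmless.
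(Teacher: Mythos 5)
Your proof is correct and follows essentially the same route as the paper's (very terse) argument: assume a shared boundary and derive that some block of $\ff$ or $\vg$ must be bordered, contradicting that both are largest block palindromes. You fill in the details the paper omits --- propagating the shared boundary inward via symmetry about $\cc$ and uniqueness of the shortest border until the contradiction lands on $\kk_0=\kk'_0$ (forcing $\dd=\dd'$) or on a bordered center block --- and each step checks out.
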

\begin{proof}
  Similar to Remark~\ref{lem:maximal-palindrome-is-representative}, if we assume that this lemma does not hold, a block of $\ff$ or $\vg$ must have a border and it contradicts that $\ff$ and $\vg$ are the largest block palindromes.
	\qed
\end{proof}

Let $\|\MBP(\T)\|$ denote the sum of the sizes of all maximal block palindromes in $\T$.
\begin{lemma}
  \label{lem:size-maximal-block-palindrome}
  For any string $\T$ of length $\NN$, $\|\MBP(\T)\| \le \NN(2\NN-1)$.
\end{lemma}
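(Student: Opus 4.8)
The plan is to bound, for each fixed center position $\cc$, the total size of all maximal block palindromes centered at $\cc$, and then sum over the $2\NN-1$ possible center positions (integer and half-integer). The key combinatorial tool is Lemma~\ref{lem:there-is-no-same-starting-positions-of-factors}: for a fixed center $\cc$, distinct maximal block palindromes correspond to distinct values of $\dd$, and their block boundaries on the right half (equivalently, by symmetry, on the left half) are pairwise disjoint as $\dd$ ranges over its admissible values. Each block boundary is an integer (or half-integer) position strictly between $\cc$ and $\NN$ (on the right side), so there are at most $\NN$ such positions available. Since the number of blocks in a block palindrome of size $2\mm+1$ (resp.\ $2\mm$) on one side is $\mm$, and each block contributes one boundary position, the sum of $\mm$ over all maximal block palindromes centered at $\cc$ is at most $\NN$ — but we must be a little careful, because the size $\|\ff\| = 2\mm + [\ff_0 \neq \varepsilon]$, so the contribution to $\|\MBP(\T)\|$ is roughly $2\mm + 1$ per palindrome, not $\mm$.

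First I would set up notation: fix a center position $\cc$ (there are $2\NN - 1$ choices), and let $\mathcal{M}_\cc$ be the set of maximal block palindromes centered at $\cc$. For each $\ff \in \mathcal{M}_\cc$ with $\ff = \ff_{-\nn}\cdots\ff_\nn$, write $s(\ff) = \nn$ for the number of blocks strictly to the right of center, so $\|\ff\| = 2\nn + 1$ if $\ff_0 \neq \varepsilon$ and $\|\ff\| = 2\nn$ if $\ff_0 = \varepsilon$; in either case $\|\ff\| \le 2\nn + 1$. The right-side block boundaries of $\ff$ are the positions $\kk_1 < \kk_2 < \cdots < \kk_\nn$ with $\kk_\ii = \cc + \dd - 1 + |\ff_1\cdots\ff_\ii|$, all lying in the range $(\cc, \NN]$, hence among at most $\NN$ positions (actually fewer, but $\NN$ is a safe bound). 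By Lemma~\ref{lem:there-is-no-same-starting-positions-of-factors}, block boundaries belonging to block palindromes of different $\dd$-values are disjoint; and two distinct maximal block palindromes centered at $\cc$ necessarily have different $\dd$ (each $\FF(\cc,\dd)$ has a unique maximal element). Therefore $\sum_{\ff \in \mathcal{M}_\cc} s(\ff) = \sum_{\ff \in \mathcal{M}_\cc} \nn \le \NN$, since these boundary sets are disjoint subsets of an $\NN$-element set.

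Next I would convert this into a bound on $\sum_{\ff \in \mathcal{M}_\cc} \|\ff\|$. Using $\|\ff\| \le 2\,s(\ff) + 1$, we get $\sum_{\ff \in \mathcal{M}_\cc}\|\ff\| \le 2\NN + |\mathcal{M}_\cc|$. Since distinct elements of $\mathcal{M}_\cc$ have distinct $\dd \in \{1, \ldots, \lfloor \NN/2\rfloor\}$, we have $|\mathcal{M}_\cc| \le \NN/2$, which would give $\sum_{\ff \in \mathcal{M}_\cc}\|\ff\| \le 2\NN + \NN/2$; summing over $2\NN-1$ centers yields roughly $5\NN^2$, which is too weak. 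To get the stated bound $\NN(2\NN-1)$ I need a sharper accounting. The cleaner route: observe that a block palindrome $\ff \in \mathcal{M}_\cc$ with $s(\ff) = \nn$ blocks per side has $\nn$ right boundaries plus $\nn$ left boundaries (symmetric), and also its center block occupies $2\dd - 1$ "center" positions; rather than tracking sizes via $\nn$, I would directly charge each \emph{non-empty block} of every maximal block palindrome to a distinct position. Specifically, to the right half, charge block $\ff_\ii$ ($1 \le \ii \le \nn$) to its ending position $\kk_\ii$; to the left half, charge $\ff_{-\ii}$ to its starting position; and charge the center block $\ff_0$ (if non-empty) to, say, the integer position $\lceil \cc \rceil$ — but this over-charges $\lceil\cc\rceil$ across different $\dd$. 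So instead I would handle even-size and odd-size palindromes separately: by Remark~\ref{lem:limited-center-block} there are at most $\NN-1$ maximal block palindromes of even size total (over all centers), contributing a lower-order term, and for the rest bound $\|\ff\| - 1 = 2\,s(\ff)$ (the non-center blocks) by the disjoint-boundary argument, getting $\sum_{\cc}\sum_{\ff}(\|\ff\|-1) \le \sum_\cc 2\NN = 2\NN(2\NN-1)$... still needs care.

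The hard part, and the step I expect to be the main obstacle, is precisely this bookkeeping: getting the constant right so the final sum is exactly $\NN(2\NN-1)$ rather than a larger multiple of $\NN^2$. I suspect the intended argument charges \emph{both} endpoints of each block (or a block boundary together with the symmetric center information) to a single pair of positions $(p, q)$ with $p$ on the left and $q$ on the right, and uses the disjointness from Lemma~\ref{lem:there-is-no-same-starting-positions-of-factors} applied on \emph{one} side to conclude that the total number of non-center blocks summed over all maximal block palindromes sharing a center is at most $\NN$ (not $2\NN$), because the left boundaries are determined by the right ones via symmetry. That would give: non-center blocks contribute $\le \NN$ per center $\times\, (2\NN-1)$ centers $= \NN(2\NN-1)$ from the right-side blocks, the left-side blocks are already counted by symmetry within that same $\NN$... no — left and right blocks are genuinely different blocks and both count toward $\|\ff\|$. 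Reconciling this is the crux: I would carefully recount $\|\ff\|$ as $2\nn + [\,\ff_0\neq\varepsilon\,]$, note $\sum_\ff \nn \le \NN$ per center (disjoint right-boundaries), so $\sum_\ff 2\nn \le 2\NN$ per center, and $\sum_\ff [\,\ff_0 \neq \varepsilon\,] \le \NN/2$ per center but at most $\NN - 1$ in \emph{total} over all centers (Remark~\ref{lem:limited-center-block} covers only even size; for odd size I use that non-empty center blocks at a fixed center are distinct substrings $\T[\cc-\dd+1\ldots\cc+\dd-1]$, at most $\lfloor\NN/2\rfloor$ of them — but this is again too many). If the $\NN/2$ per-center bound on $|\mathcal{M}_\cc|$ must be used, then $\sum_\cc(2\NN + \NN/2) < 3\NN(2\NN-1)$, so the claimed $\NN(2\NN-1)$ forces a subtler pairing; I would therefore look for a direct injection from (maximal block palindrome, non-center block) pairs into ordered pairs of positions, or from (maximal block palindrome) into its outermost block boundary, to pin down the exact constant — and if after honest effort the clean bound $\NN(2\NN-1)$ resists, I would report the best constant the disjointness argument yields and flag the gap.
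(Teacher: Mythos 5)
You have the right skeleton --- fix a center $\cc$, invoke Lemma~\ref{lem:there-is-no-same-starting-positions-of-factors}, multiply by the $2\NN-1$ centers --- and your observation that the right-boundary disjointness only yields $\sum_{\ff}\nn_\ff\le\NN$ per center (hence $\sum_\ff\|\ff\|\le 2\NN+|\mathcal{M}_\cc|$, a constant worse than the one claimed) is an accurate diagnosis of where the naive accounting stalls. But the proposal does not close: you end by offering to ``report the best constant'' and ``flag the gap,'' and the lemma asserts the specific bound $\NN(2\NN-1)$. So there is a genuine gap, and it is exactly the bookkeeping step you identified as the crux.

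The missing idea is to charge \emph{every} non-empty block --- left, center, and right --- of every maximal block palindrome centered at $\cc$ to its \emph{ending position}, and to verify that this map is injective into $\{1,\dots,\NN\}$; that immediately gives at most $\NN$ blocks per center, with no doubling for the left half and no separate term for center blocks, hence $\|\MBP(\T)\|\le\NN(2\NN-1)$. Injectivity is checked in cases. First, every left block ends at a position $\le\lceil\cc\rceil-1$ while every non-empty center or right block ends at a position $\ge\lceil\cc\rceil$, so the two groups cannot collide. Within the right group, writing $\kk_0=\cc+\dd-1$ for the end of the center block of $\ff\in\FF(\cc,\dd)$ and $\kk_1,\dots,\kk_\nn$ for the ends of its right blocks: the $\kk_\ii$ with $\ii\ge1$ are pairwise distinct across palindromes by Lemma~\ref{lem:there-is-no-same-starting-positions-of-factors}; the $\kk_0$'s are distinct across palindromes because the $\dd$'s differ; and $\kk'_0$ of $\vg\in\FF(\cc,\dd')$ cannot equal $\kk_\ii$ of $\ff\in\FF(\cc,\dd)$ with $\ii\ge1$, since that would force $\vg_0=\ff_{-\ii}\cdots\ff_0\cdots\ff_\ii$, making $\ff_\ii=\ff_{-\ii}$ a border of the unbordered block $\vg_0$. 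Finally, the ending position of the left block $\ff_{-\ii}$ is the mirror image $2\cc-\kk_{\ii-1}-1$ of the starting position of $\ff_\ii$, so distinctness of left-block endings across palindromes reduces to distinctness of $\kk_0,\dots,\kk_{\nn-1}$, which is the statement just established. (In fairness, the paper's own proof simply asserts ``at most $\NN$ blocks per center'' from Lemma~\ref{lem:there-is-no-same-starting-positions-of-factors} without spelling out these cases; but the claim does hold, and your write-up stops short of it rather than establishing it.)
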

\begin{proof}
  From Lemma~\ref{lem:there-is-no-same-starting-positions-of-factors}, any two largest block palindromes, whose center positions
  are same but center blocks are different, do not share the block boundaries.
  This implies that, for a half-position $\cc$, the number of blocks of maximal block palindromes whose center position is $\cc$ is up to $\NN$.
  Since there are $2\NN-1$ center positions, we have $\|\MBP(\T)\| \le \NN(2\NN-1)$.
	\qed
\end{proof}

\section{Enumeration of Maximal Block Palindromes}
In this section, we consider how to enumerate all the maximal block palindromes $\MBP(\T)$.
A brute-force approach based on Theorem~\ref{theorem:largest} would compute the largest block for every possible substring $\T[\bb \ldots \bb + \ell - 1]$
(while suppressing output of non-maximal ones), which takes $\Theta(\sum_{\ell = 1}^{\NN} \ell (\NN - \ell)) = \Theta(\NN^3)$ time.

We propose an optimal solution running in $o(\NN^3)$ time.
\begin{theorem}
  All maximal block palindromes that appear in $\T$ can be enumerated in $O(\NN + \|\MBP(\T)\|)$ time, 
  where $\|\MBP(\T)\|$ is the output size.
\end{theorem}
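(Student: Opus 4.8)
The plan is to process the $2\NN-1$ centers one at a time, reducing each to a disjoint family of ``extension chains.'' First I would preprocess $\T$ for constant-time $\LCE$ queries in $O(\NN)$ time, exactly as in the proof of Theorem~\ref{theorem:largest}. Fix a half-position $\cc$ and let $I_\cc$ be the set of integers $j'$ with $\lfloor\cc\rfloor\le j'\le\min(\NN,2\cc-1)$; these are the feasible right boundaries of a substring symmetric about $\cc$ (the matching left boundary being $q'=2\cc-j'\ge 1$). I would view ``greedily appending the next symmetric pair of blocks'' as a partial function $\mathrm{next}_\cc$ on $I_\cc$: $\mathrm{next}_\cc(j')=j'+\ell$, where $\ell\ge 1$ is the smallest length with $\T[2\cc-j'-\ell\ldots 2\cc-j'-1]=\T[j'+1\ldots j'+\ell]$ (equivalently, the length of the shortest nonempty string that is both a suffix of $\T[1\ldots 2\cc-j'-1]$ and a prefix of $\T[j'+1\ldots\NN]$), and $\mathrm{next}_\cc(j')$ is undefined when no such $\ell$ exists. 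Since $\mathrm{next}_\cc$ is strictly increasing, its functional graph on $I_\cc$ is a disjoint union of increasing chains, each ending at an undefined step.

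I would then record the combinatorial skeleton, which is precisely where Remark~\ref{lem:maximal-palindrome-is-representative} and Lemma~\ref{lem:there-is-no-same-starting-positions-of-factors} enter. By the shortest-border arguments used there one shows: (a) $j'\in I_\cc$ is a chain root (not in the image of $\mathrm{next}_\cc$) iff $\T[2\cc-j'\ldots j']$ is unbordered, i.e.\ iff $j'$ is the right boundary of the center block of some largest block palindrome in $\FF(\cc,\dd)$ --- if $j'$ is a step target then $\T[2\cc-j'\ldots j']$ has the form $\uu\vv\uu$ with $|\uu|\ge 1$, hence a border, and conversely if $\T[2\cc-j'\ldots j']$ has a shortest border then stripping it exhibits $j'$ as a step target; and (b) distinct chains are vertex-disjoint (this is Lemma~\ref{lem:there-is-no-same-starting-positions-of-factors}). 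Hence, for each $\cc$, the chains are in bijection with the maximal block palindromes centered at $\cc$ --- the chain from a root, read outward, realises the greedy and hence (by Remark~\ref{lem:maximal-palindrome-is-representative}) the maximal block palindrome of the relevant $\FF(\cc,\dd)$ --- and they partition $I_\cc$. Two consequences: the output contributed by center $\cc$ has total size at least $|I_\cc|$, so (checking that every $j'\in I_\cc$ indeed lies on a chain, via the largest block palindrome of $\T[2\cc-j'\ldots j']$) $\|\MBP(\T)\|\ge\sum_\cc|I_\cc|=\Theta(\NN^2)$, which with Lemma~\ref{lem:size-maximal-block-palindrome} gives $\|\MBP(\T)\|=\Theta(\NN^2)$; in particular $O(\NN+\|\MBP(\T)\|)=O(\NN^2)$, and it suffices to spend $O(|I_\cc|)$ time per center.

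The enumeration itself is then direct: for each $\cc$, sweep $j'$ through $I_\cc$ from left to right with a visited-flag; at each unvisited $j'$ (necessarily a root) trace its chain $j'\to\mathrm{next}_\cc(j')\to\cdots$ until an undefined step, marking every position visited and emitting the traced chain as a maximal block palindrome (discarding the degenerate empty factorization at an even center when it does not extend). By disjointness each element of $I_\cc$ is touched once, so this costs $O(|I_\cc|)$ per center, plus the cost of the $\mathrm{next}_\cc$ evaluations --- and that cost is the main obstacle. Evaluating one step by testing $\ell=1,2,3,\ldots$ with an $O(1)$ $\LCE$ query each costs $O(\ell)$ on a successful step, which is acceptable because $\sum\ell$ along a chain telescopes to its span and hence to $O(|I_\cc|)$ over the whole center; but a step that turns out to be a dead end can cost $\Theta(|I_\cc|)$, and a center may have $\Theta(|I_\cc|)$ chains, which would push the running time to $\Theta(\NN^3)$.

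The way around this --- and the technical heart of the proof --- is to compute the whole function $\mathrm{next}_\cc$ on $I_\cc$ in $O(|I_\cc|)$ total time by a single amortized pass. As the frontier moves, the relevant pair of strings --- a suffix of $\T[1\ldots q'-1]$ against a prefix of $\T[j'+1\ldots\NN]$, each growing at the boundary adjacent to the center --- changes by $O(1)$ per step, so a Knuth--Morris--Pratt failure-function-style ledger of the current candidate overlaps keeps the amortized work per frontier constant; establishing that this ledger is correct, that it genuinely amortizes to $O(|I_\cc|)$ per center, and that the even/odd-center parity and the boundary cases where the frontier reaches position $1$ or $\NN$ are handled, is the bulk of the work. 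Everything else --- forming the chains, reading off the factorizations, and summing the costs to $O(\NN+\|\MBP(\T)\|)$ --- is routine once $\mathrm{next}_\cc$ is available within this budget.
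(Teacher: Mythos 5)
Your combinatorial reduction is sound, and in fact mirrors what the paper's algorithm ultimately assembles: your chain roots are exactly the unbordered centered substrings, chain disjointness is Lemma~\ref{lem:there-is-no-same-starting-positions-of-factors}, and the greedy chain from a root is the maximal block palindrome of the corresponding $\FF(\cc,\dd)$ by Remark~\ref{lem:maximal-palindrome-is-representative}. Your observation that every symmetric substring lies on exactly one chain, hence $\|\MBP(\T)\|=\Omega(\NN^2)$ and the target bound degenerates to $O(\NN^2)$, is correct and a genuinely nice point the paper does not make (though note the paper's algorithm really targets $\|\MBP_{\ge 2}(\T)\|$, which can be $o(\NN^2)$, so your budget accounting only works for the literal statement with size-$1$ palindromes included). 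The problem is that you have located, but not solved, the entire algorithmic content of the theorem: computing $\mathrm{next}_\cc$ on all of $I_\cc$ in $O(|I_\cc|)$ time. You correctly diagnose that naive evaluation fails (dead ends cost $\Theta(|I_\cc|)$ each and there can be $\Theta(|I_\cc|)$ of them, e.g.\ for a string of distinct characters), but the proposed fix --- a ``KMP failure-function-style ledger'' --- is asserted, not constructed, and you yourself label it ``the bulk of the work.'' That is a gap, not a proof.

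Moreover, the sketch as described does not obviously work. Your claim that the two relevant strings (a suffix of $\T[1\ldots 2\cc-j'-1]$ and a prefix of $\T[j'+1\ldots \NN]$) ``change by $O(1)$ per step'' at the center-adjacent boundary is exactly the bad case for failure-function machinery: sweeping outward they lose a character at the matched end, and sweeping inward one of them gains a character at its \emph{front}; either way, the set of candidate overlaps is not incrementally maintainable the way KMP maintains borders of a string extended at one fixed end. Equivalently, you need the shortest border of $\T[2\cc-j\ldots j]$ for every $j$, i.e.\ of a string growing at \emph{both} ends, and no standard amortization applies. The paper avoids this entirely: instead of working center by center, it enumerates globally every pair of occurrences of an unbordered substring (for each position $b$ it walks the occurrence list of $\T[b]$, uses constant-time $\LCE$ queries, and maintains a sorted interval set of bordered lengths to skip), charges each reported pair to a block of some maximal block palindrome, and only then sorts the pairs by center and stitches them into your chains. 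To complete your proof you would either need to actually construct and analyze the ledger (and I would expect that to fail or to collapse into the paper's occurrence-list argument), or replace Step ``compute $\mathrm{next}_\cc$'' with such a global, output-charged enumeration.
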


We actually consider a variant of the problem:
We propose an algorithm to enumerate all the maximal block palindromes of size $\ge 2$,
whose total output size is denoted by $\|\MBP_{\ge 2}(\T)\|$, in optimal $O(\NN + \|\MBP_{\ge 2}(\T)\|)$ time.
That is to say, we can completely ignore maximal block palindromes of size $1$,
which might not be interesting if we focus on palindromic structures in $\T$.
If we want to enumerate $\MBP(\T)$, we can do that by slightly modifying the algorithm.

Our algorithm proceeds in two steps:
(i) enumerate all the pairing unbordered blocks for all center positions in a batch processing, and
(ii) build maximal block palindromes from the enumerated blocks.

In Step (i), we firstly enumerate every pair of occurrences of an unbordered substring in $\T$.
Note that the pair will be a component of a maximal block palindrome, and the total number of enumerated pairs is $O(\|\MBP_{\ge 2}(\T)\|)$.
We preprocess $\T$ in $O(\NN)$ time and space to support LCE queries in constant time.
We also compute, for every character in $\T$, the list storing all the occurrences of the character in increasing order,
all of which can be obtained by sorting the positions $\ii$ of $\T$ with the key $\T[\ii]$ by radix sort in $O(\NN)$ time and space.

Now we focus on an occurrence $\bb$ of $\T[\bb]$,
and identify every pair of occurrences of an unbordered substring such that the left one starts at $\bb$.
Let $\bb < \bb_1 < \bb_2 < \cdots < \bb_k$ be the occurrences of $\T[\bb]$ in $\T[\bb \ldots \NN]$.
We process $\bb_i \in \{\bb_1, \ldots, \bb_k\}$ in increasing order to identify common unbordered substrings
starting at $\bb$ and $\bb_i$ using $\LCE$ queries.
At the first round for $\bb_1$, we see that for any $\ell$ with $1 \le \ell \le \min (\LCE(\bb, \bb_1), \bb_1 - \bb)$,
the common substring of length $\ell$ starting at $\bb$ and $\bb_1$ is unbordered,
and thus, we report each pair of such unbordered substrings.
While processing $\bb_i \in \{\bb_1, \ldots, \bb_k\}$ in increasing order,
we maintain a set $\LL$ of positive integers $\ell$ (by a sorted list of intervals) such that $\T[\bb \ldots \bb + \ell - 1]$ has a border 
caused by the common substrings starting at $\bb$ and $\bb_i$'s processed so far.
We use $\LL$ to efficiently skip $\ell$'s such that $\T[\bb \ldots \bb + \ell - 1]$ has a border in the later rounds.
For example, in the first round, we add the interval $[\bb_1 - \bb + 1 \ldots \bb_1 - \bb + \LCE(\bb, \bb_1)]$ to $\LL$ (which is initially empty)
as, for any $\ell \in [\bb_1 - \bb + 1 \ldots \bb_1 - \bb + \LCE(\bb, \bb_1)]$, $\T[\bb \ldots \bb + \ell - 1]$ has a border 
caused by the common substring starting at $\bb$ and $\bb_1$.
When processing $\bb_i$ for $1 < i \le k$, we see that for any $\ell \in [1 \ldots \min (\LCE(\bb, \bb_i), \bb_i - \bb)] \setminus \LL$,
the common substring of length $\ell$ starting at $\bb$ and $\bb_i$ is unbordered.
Updating $\LL$ can be easily done in $O(1)$ time by adding (merging if necessary) the interval $[\bb_i - \bb + 1 \ldots \bb_i - \bb + \LCE(\bb, \bb_i)]$
to $\LL$ (observe that the new interval is always pushed back to $\LL$ or merged with the last interval of $\LL$
as we process $\{\bb_1, \ldots, \bb_k\}$ in increasing order).
Note that $[1 \ldots \min (\LCE(\bb, \bb_i), \bb_i - \bb)] \setminus \LL$ always contains $1$,
and we can incrementally enumerate its element in constant time per element
because $\LL$ is maintained as a sorted list of intervals.
Thus, the computation cost can be charged to the number of output, i.e., it runs in $O(\NN + \|\MBP_{\ge 2}(\T)\|)$ time in total.

When we find a pair of occurrences $\bb_{l} < \bb_{r}$ of an unbordered substring of length $\ell$,
we list it up as a triple $(\cc, \bb_{r}, \bb_{r} + \ell)$, where $\cc = (\bb_{l} + \bb_{r} + \ell - 1) / 2$ is the center of the pairing blocks.
After listing up all those triples, we sort them using the first and second elements as keys by radix sort,
which can be done in $O(\NN + \|\MBP_{\ge 2}(\T)\|)$ time and space.

Now we are ready to proceed to Step (ii) in which we build the maximal block palindromes from the sorted list of triples computed in Step (i).
For building the maximal block palindromes with center $\cc$,
we scan the sublist of triples having center $\cc$ and connect the pairing blocks whose beginning and ending positions are adjacent
using the information of the second (the beginning position of the block) and third (the ending position of the block plus one) elements of the triples.
We build all the $\cc$-centered maximal block palindromes by extending their blocks outwards simultaneously
with a $0$-initialized array $A$ of length $\NN$.
When we look at a triple $(\cc, \bb_{r}, \bb_{r} + \ell)$,
we write $\bb_{r}$ to $A[\bb_{r} + \ell]$, and connect the block with the block ending at $\bb_{r} - 1$ if such exists
(which can be noticed by the information $A[\bb_{r}] \neq 0$).
Since the block boundaries are not shared due to Lemma~\ref{lem:there-is-no-same-starting-positions-of-factors},
the information written in $A$ can be propagated correctly to extend the blocks.
It runs in time linear to the size of the sublist.
We can also clear $A$ in the same time by scanning the sublist again while writing $0$ to the entries we touched.

Since the initialization cost $O(\NN)$ of $A$ is payed once in the very beginning of Step (ii)
and the other computation cost can be charged to the output size, the total time complexity is $O(\NN + \|\MBP_{\ge 2}(\T)\|)$.

For enumerating $\MBP(\T)$, we modify Step (ii).
While scanning the sublist for center $\cc$,
we can identify all the positions $\ee \ge \cc$ such that $\ee$ is not an ending position of some pairing block,
for which the substring $\T[2 \cc - \ee \ldots \ee]$ is unbordered.
If the unbordered substring cannot be extended outwards by blocks (which can also be checked while scanning the sublist),
it is the maximal block palindrome of size $1$ to output for $\MBP(\T)$.
The algorithm runs in $O(\NN + \|\MBP(\T)\|)$ time in total as the additional cost can be charged to the output size.

\section{Conclusions}
In this paper, we investigated several properties of block palindromes which are the generalization of palindromes and gapped palindromes.
We also proposed an optimal-algorithm to enumerate all maximal block palindromes appearing in a given string.
As mentioned in Remark~\ref{lem:limited-center-block},
if we impose a limitation on the lengths of center blocks,
the upper bound of the number of maximal block palindromes is reduced to $O(\NN)$,
where $\NN$ is the length of an input string.
In particular, for maximal block palindromes of even size, the center blocks are super restricted to be empty.
The situation is similar to considering ordinal palindromes (in which the center blocks are strict)
versus maximal gapped palindromes (in which the restriction on the center blocks are relaxed).
It would be interesting to investigate the properties of maximal block palindromes whose center blocks have restricted lengths
and develop efficient algorithms to enumerate only such a subset of maximal block palindromes.

\bibliographystyle{plainurl}
\bibliography{ref}

\end{document}